\newtheorem{theorem}{Theorem} 
\newtheorem{definition}{Definition} 
\newtheorem{remark}{Remark}
\renewcommand{\vec}[1]{\boldsymbol{#1}}
\newcommand{\V}{\mathcal{V}}
\newcommand{\R}{\mathbb{R}}
\newcommand{\N}{\mathbb{N}}
\newcommand{\ReLU}{\mathrm{ReLU}}
\newcommand{\FF}{\mathsf{FF}}
\newcommand{\tf}{\mathsf{TF}}
\newcommand{\tm}{\mathsf{TM}}
\renewcommand{\pm}{\mathsf{PM}}
\newcommand{\emb}{\mathsf{emb}}
\newcommand{\pos}{\mathsf{pos}}
\newcommand{\out}{\mathsf{out}}
\newcommand{\dec}{\mathsf{dec}}
\newcommand{\hardmax}{\mathsf{hardmax}}
\newcommand{\softmax}{\mathsf{softmax}}
\newcommand{\SPACE}{\mathsf{SPACE}}
\newcommand{\pspace}{\mathsf{PSPACE}}
\newcommand{\poly}{\mathsf{poly}}
\newcommand{\window}{\mathsf{WINDOW}}
\title{Constant Bit-size Transformers Are Turing Complete}
\author{
Qian Li\thanks{Shenzhen International  Center For Industrial  And  Applied  Mathematics, Shenzhen Research Institute of Big Data. Email: \texttt{liqian.ict@gmail.com}}
\quad
Yuyi Wang\thanks{CRRC Zhuzhou Institute \& Tengen Intelligence Institute. Email: \texttt{yuyiwang920@gmail.com}}
}
\date{}
\begin{document}
\maketitle

\begin{abstract}
We prove that any Turing machine running on inputs of arbitrary length can be simulated by a constant bit-size transformer, as long as the context window is sufficiently long. This improves previous works, which require scaling up either the model's precision or the number of parameters on longer inputs. Furthermore, we prove that the complexity class $\SPACE[s(n)]$ exactly characterizes the expressive power of a constant bit-size transformer with a context window of length $s(n)$. Our approach relies on simulating Post machines, a Turing-complete computational model. Post machines can be modeled as automata equipped with a queue, exhibiting computational behaviors naturally aligned with those of transformers. The behavioral similarity between transformers and Post machines may offer new insights into the mechanisms underlying the reasoning abilities of transformers.
\end{abstract}

\section{Introduction}\label{sec:intro}
Transformer-based large language models (LLMs) with chain-of-thought (CoT) steps \citep{gpt4,team2023gemini,anthropic2024claude3,dubey2024llama,deepseek} have demonstrated exceptional capabilities across a wide range of complex reasoning tasks, such as mathematical problem solving and code generation \citep{hendrycks2021measuring,jimenez2023swe,rein2023gpqa}. These remarkable empirical successes raise a fundamental question: What enables transformers to support such general reasoning capabilities, and what are the limits of this capability? Understanding this question is not only of theoretical interest but also of practical significance for further enhancing the reasoning abilities of transformers.

Motivated by this fundamental question, a line of theoretical works \citep{perez,bhat,ashish2,ashish,tengyu,prompt,zubic,depth,pencil} have studied the reasoning abilities of transformers through the lens of expressiveness. A central result from these works is that transformers (using CoT steps) are Turing complete; that is, transformers have sufficient expressive power to simulate arbitrary Turing machines (TMs). However, in all these results, the bit-size of the transformers (i.e., the total number of bits representing the model) is not constant: either the precision bit \citep{perez,bhat,ashish,prompt} or the embedding dimension \citep{tengyu} must grow with the input length, meaning that larger models are required to handle longer inputs. This scaling requirement raises a fundamental barrier to approaching transformer-based AGI \citep{lecun,GoldblumFRW24}, particularly for lifelong learning agents that interact with an open environment and accumulate an ever-growing history (e.g., the AIXI agent \citep{uai}), where the input length grows unboundedly over time.

This motivates the following critical question:
\begin{center}
\textit{Problem 1: Is it necessary to continue scaling up the bit size of transformers to handle longer inputs?}
\end{center}

\begin{table}[t]
\centering
\caption{Comparing the existing Turing-completeness proofs in terms of required precision, embedding dimension, effective window size, and CoT length. We focus on the nontrivial case where $t(n),s(n)\geq n$. We remark that these constructions typically do not employ the vanilla architecture but instead introduce specific modifications; see Appendix \ref{app:mod} for a summary.}
\label{tab:turing-completeness}
\resizebox{\linewidth}{!}{
\begin{tabular}{lcccccc}
\toprule
\textbf{Source} & \textbf{Precision} & \textbf{Dimension} & \textbf{Window} & \textbf{CoT per TM-step}\\
\midrule
\cite{perez} & $O(\log t(n))$ & $O(1)$ & $n+t(n)$ & $1$ \\
\cite{bhat} & unbounded & $O(1)$ & $n+t(n)$ & $1$ \\
\cite{ashish} & $O(\log t(n))$ & $O(1)$ & $n+t(n)$ & $1$ \\
\cite{tengyu}$^\ast$ & $O(1)$ & $O(\log t(n))$ & $O(t(n)\log t(n))$ & $O(\log t(n))$ (amortized)  \\
\cite{prompt} & $O(\log t(n))$ & $O(1)$ & $O(t(n)\log t(n))$ & $O(\log t(n))$ (amortized)\\
%\cite{pencil}$^\dagger$ & $O(\log t(n))$ & $O(1)$ & $s(n)$ & $1$\\
\textbf{This work} & $O(1)$ & $O(1)$ & $s(n)$ & $s(n)$\\
\bottomrule
\end{tabular}
}
\vspace{2pt}
\raggedright \footnotesize $^{\ast}$ Any multi-tape TM running in time $t(n)$ can be simulated by a Boolean circuit of size $O(t(n)\log t(n))$ \citep{arora2009computational}.
%\raggedright \footnotesize $^{\dagger}$ The construction incorporates a reduction mechanism to recursively erase intermediate CoT steps.
\end{table}

\subsection{Our contribution} 
We answer Problem 1 in the negative. We prove that \emph{constant bit-size} transformers\footnote{Transformers are allowed to generate arbitrarily long  CoTs before producing the final output.} are Turing complete. Specifically, given any TM, there exists a transformer with fixed numerical precision and a fixed number of parameters that can simulate the TM on inputs of arbitrary length, provided that the transformer's context window is sufficiently long. In particular, by applying it to the universal Turing machine, which takes as input a description $\tm$ of a TM and an input string $x$ and outputs $\tm(x)$, we can conclude that a single, constant bit-size transformer can compute any computable function, as long as the description of a relevant TM is loaded in the prompt.

\begin{remark}
In fact, it is not even necessary to load the task-specific TM description into the prompt or to pre-inject it during pre-training. Instead, the transformer can be instantiated to simulate a TM that performs the meta-task of designing an algorithm for the given reasoning problem. For example, one can design the transformer to simulate Levin’s universal search algorithm \citep{levin1,levin2,oops}, which solves arbitrary search problems, such as theorem proving and planning, as quickly as the fastest possible algorithm in an asymptotic sense. This suggests that, in principle, a single constant bit-size transformer has sufficient expressive power to achieve a form of general reasoning ability.
\end{remark}

We emphasize that the context window length is the minimal resource needed to scale up to handle longer inputs: The context window must be at least as long as the input length $n$, otherwise the transformer cannot even access the entire input. Moreover, we provide an exact characterization of the expressive power of constant bit-size transformers as a function of their context window length. Specifically, let $\window[s(n)]$ denote the complexity class consisting of decision problems solvable by a constant bit-size transformer with a context window of length $O(s(n))$, $\SPACE[s(n)]$ the class of decision problems solvable by a TM using $O(s(n))$ space, and $\pspace:= \bigcup_{k \in \N} \SPACE[n^k]$ the class of decision problems solvable in polynomial space. We now state the main theorem of this paper:

\begin{restatable}{theorem}{maintheorem}\label{thm:main0}
For any non-decreasing function $s(n) \geq n$, we have $\window[s(n)] = \SPACE[s(n)]$. In particular, $\window[\poly(n)]=\pspace$.
\end{restatable}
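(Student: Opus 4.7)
The plan is to establish both inclusions of the claimed equality. For the upper bound $\window[s(n)] \subseteq \SPACE[s(n)]$, I would show that a universal Turing machine can simulate any constant bit-size transformer whose context window holds $O(s(n))$ tokens while using only $O(s(n))$ space: the window fits in $O(s(n))$ bits since each position stores $O(1)$ bits, and one pass per layer suffices to evaluate query-key-value products, fixed-precision softmax, and feed-forward transformations, with scratch space reused between layers and between CoT steps. Tokens that slide out of the window are never accessed again, so the simulation stays within $O(s(n))$ space no matter how long the CoT grows. The $\pspace$ corollary then follows by instantiating $s(n)=n^k$ and taking a union.

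For the harder direction $\SPACE[s(n)] \subseteq \window[s(n)]$, I would route the proof through the Post-machine model foreshadowed in the abstract, in two stages. First, I would argue that any TM using $O(s(n))$ space can be simulated by a Post machine whose queue length stays at $O(s(n))$ throughout its execution, by encoding the TM tape as a cyclic sequence on the queue and letting each Post-machine step read a symbol from the head, rotate a bounded amount of data, and write to the tail while updating a finite control that tracks the TM head position. Second, I would build a constant bit-size transformer whose sliding context window literally \emph{is} this queue: appending a CoT token enqueues a symbol at the tail, and a token sliding out of the window dequeues a symbol at the head. The Post machine's finite control lives inside the transformer's fixed feed-forward weights, and reading the current head symbol is implemented by attention anchored to distinguished delimiter tokens periodically emitted between queue passes.

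The main obstacle is realising usable attention under fixed precision when the total number of tokens emitted is much larger than any representable positional label. Standard absolute positional encodings collapse in this regime, so I would rely entirely on \emph{relative}, content-addressable attention: a small collection of hard-max heads locates the unique most recent occurrence of a marker symbol, and auxiliary heads read the symbols at constantly many offsets from that anchor. Showing that such selection patterns are realisable by fixed-bit-width key/query/value matrices—so that the score gap between the intended match and every other key stays bounded away from zero uniformly in $n$—is the technical core of the construction. A secondary synchronisation issue, that each Post-machine step should correspond to a bounded number of transformer emissions, I would handle by expanding each step into a fixed-length macro-block of CoT tokens whose internal structure both advances the queue and re-establishes the marker needed by the next step's relative attention.
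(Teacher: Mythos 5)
Your high-level strategy matches the paper: the upper bound is a direct space-bounded TM simulation of the windowed transformer, maintaining a buffer of the current $O(s(n))$-token window and reusing scratch space per layer and per CoT step (this is essentially the paper's Theorem~\ref{thm:upper}); and for the lower bound you correctly route through Post machines with the sliding context window playing the role of the queue and the finite control baked into the feed-forward weights (this is the paper's Theorem~\ref{thm:lower}). Where you diverge is the mechanism for the one nontrivial step: reading the symbol at the front of the queue under fixed precision.

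You propose periodic marker tokens, content-addressable hardmax heads that locate ``the most recent marker,'' auxiliary heads that ``read the symbols at constantly many offsets from that anchor,'' and fixed-length macro-blocks to re-emit markers each step. This introduces a gap you do not close: with relative positional encoding, an attention head reads at offsets from the \emph{current query position}, not from a content-detected anchor elsewhere in the window; ``fetch the token $k$ past wherever another head attended'' is not a one-layer primitive. It can be emulated with extra layers that first shift content by fixed offsets before matching on markers, plus careful bookkeeping so that exactly one marker lies in the window at a time, but none of that is supplied, and it is all unnecessary. Moreover, the precision worry motivating the markers---keeping score gaps bounded away from zero as $n$ grows---is largely a softmax concern; under the $\hardmax$ abstraction you also invoke, any fixed integer-valued score pattern with a unique maximum works, so precision never needs to grow. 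The paper sidesteps the whole issue by first adapting the Post machine so its queue length is held \emph{exactly} at $s(n)$ (pad the input with $\#$ and make the front head always move right). Then the queue front sits at the \emph{fixed} relative offset $s(n)-1$ from the current position, i.e.\ it is simply the oldest token still inside the window, so the relative positional encoding can mark that one offset with a distinguished value (the paper's $\pos$ takes only values in $\{0,1,2\}$), a single $\hardmax$ head selects it deterministically, a feed-forward network implements $\delta$, and each PM step is exactly one emitted token---no markers, no macro-blocks, no second round of attention. You would be fine if you replaced the marker/macro-block indirection with this observation that, once the queue length is pinned to $s(n)$, the offset to the queue front is a constant of the construction.
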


Previous proofs establishing the Turing completeness of transformers suggested that the context window length must scale with the time complexity for solving the task. In contrast, Theorem \ref{thm:main0} implies that it is sufficient for the context window length to scale only with the space complexity instead. This distinction is crucial, since many problems exhibit a significant gap between space complexity and time complexity. For instance, solving  Boolean satisfication problems, Sokoban puzzles \citep{hearn2005pspace}, model checking tasks \citep{sistla1985complexity}, or other $\pspace$-complete problems, require only polynomial space by tracking the current configuration, whereas the time complexity is widely conjectured to be superpolynomial \citep{arora2009computational} or even $2^{\poly(n)}$ \citep{impagliazzo2001complexity}. Recently, \cite{ryan} proved that every multi-tape Turing machine running in time $t(n)$ can be simulated in space only $O(\sqrt{t(n)\log t(n)})$. As a consequence, there are problems solvable in $O(s(n))$ space that require $\tilde{\Omega}(s(n)^2)$ time on multitape Turing machines.

%In addition, in Theorem \ref{thm:main0}, the simulation of a TM using a transformer is uniform and easily generalizable with respect to the context window length. Specifically, in order to simulate a given TM running on increasingly long inputs, the transformer will lengthen its context window by simply adjusting the relative positional encoding appropriately, and keeping all other model parameters unchanged. Moreover, the update to the positional encoding is also straightforward.

In addition, the simulation of a TM by a transformer in Theorem \ref{thm:main0} is uniform and easily generalizable with respect to the context window length. Specifically, to simulate a given TM on increasingly long inputs, the transformer just needs to extend its context window by adjusting the relative positional encoding according to an explicit formula, without changing all other model parameters.

%In addition, the simulation of a TM by a transformer in Theorem \ref{thm:main0} is uniform and easily generalizable with respect to the context window length. Specifically, to simulate a given TM running on increasingly long inputs, the transformer only needs to extend its context window by appropriately adjusting the relative positional encoding according to a explict formula, while keeping all other model parameters unchanged. %Furthermore, given the context window length, the computation of the corresponding positional encoding is straightforward.

%In addition, in Theorem \ref{thm:main0}, the simulation of a TM using a transformer is uniform and easily generalizable with respect to the input length. Specifically, given a TM and inputs of increasing length, one only needs to adjust the relative positional encoding appropriately, while keeping all other model parameters unchanged. Moreover, the update to the positional encoding is also straightforward. 

\begin{remark}
Aiming for stronger deductive reasoning ability, a line of engineering efforts tend to lengthen context window instead of growing parameter count. Our result formally justifies this strategy: extending the window alone is sufficient. Moreover, a window that grows only polynomially with the input already suffices to solve every PSPACE problem, including game-tree search and mathematical proofs.
\end{remark}

\begin{remark}
There is a ongoing debta on whether transformers can approach AGI. A common negative view (e.g. \citep{lecun,GoldblumFRW24}) argues that transformers, limited by finite size and window, cannot approximate unbounded computation. Our result shows that, at least for deductive reasoning tasks (whose solutions depend only on explicit information), scaling window length alone is theoretically sufficient, and continually scaling up size is not a principled requirement.
\end{remark}

Our proof strategy is to simulate Post machines \citep{post1} with transformers. A Post machine can be modeled as an automaton equipped with a queue. By storing the TM's tape within the queue in a cyclic manner, Post machines can faithfully simulate Turing machines, and thus are Turing complete \citep{post1,davis1994computability}. By viewing the transformer's context window as a queue-like structure, one can observe that the behavior of Post machines naturally aligns with that of transformers. Leveraging this behavioral similarity, we can then design a transformer that faithfully simulates the step-by-step execution of a Post machine. In general, this alignment offers a new interpretation of the attention mechanism: not only as a statistical aggregator, but also as a discrete computational operation over queue-like structures.

\subsection{Other related work}
Recent theoretical studies have examined the computational capabilities of transformers under various architectural constraints.  
\cite{feng,ashish2,ashish,tengyu} proved that CoT steps can improve the reasoning capabilities of transformers: Transformers without CoT can only solve problems that can be solved by shallow circuits, whereas allowing polynomially long CoT steps enables the same model to solve any problem in $\mathrm{P}$. In addition, \cite{depth} showed that a transformer with depth growing only logarithmically in the input length can recognize regular languages and solve graph connectivity problems without CoT, whereas a constant-depth transformer requires CoT steps of length $\Omega(n)$. We remark that these papers assume either a $O(\log n)$-bit precision or a $O(\log n)$-embedding size, and require the context window to be long enough to cover the entire context.
The survey by \cite{strobl2024formal} provides a comprehensive overview on formal-language expressivity of transformers, clarifying how
design choices affect expressivity.

Aiming to reduce the required context window length and the memory, \cite{pencil} proposed the PENCIL framework, which incorporates a reduction mechanism to recursively ``erase'' intermediate CoT steps. This allows the context window length to scale with the space complexity of the computation, rather than the time complexity. We remark that their results still require increasing the bit-size of the transformer as the input length grows.

\cite{schuurmans2024autoregressive} considered a generalization of autoregressive decoding where, given a long input, emitted tokens are appended to the end of the sequence as the context window advances, and proved that the resulting system is Turing complete. This work also exploits the similarity between the behaviors of Post machines and Transformers. In fact, they analyze an even restrictive model, the so-called Lag system, where the next token depends solely on the leftmost \textbf{two} tokens in the queue and \textbf{not} on the state $q$ of PM. The main technical challenge in their construction is thus to compensate for the absence of access to the state $q$. As a result, their simulation requires $\Theta(s(n)^3)$ CoT tokens per TM step, whereas ours are only $O(s(n))$.

%We remark that, in contrast to Theorem~\ref{thm:main0}, their results apply to the augmented transformers and still require increasing the bit-size as the space complexity grows.

%\warn{TODO: 1,technical part; 2. conclusion; 3. last check + grammaer}

\section{Notations and preliminaries}\label{sec:notation}

Let $\R$ be the set of real numbers and $\N$ be the set of natural numbers. For $n\in \N$, let $[n]$ denote $\{1,2,\cdots,n\}$. We use bold lowercase letters to represent vectors or sequences (e.g., $\vec{x}$), and bold uppercase letters to represent matrices (e.g., $\vec{A}$). Given a vector $\vec{x}$ and indices $j \leq i$, we use $\vec{x}_{j:i}$ to denote the sub-vector $(x_j, x_{j+1}, \ldots, x_i)$. Let $\vec{0}_n$ denote the $n$-dimensional all-zeros vector. 
%Given a vector $\vec{h}=(h_1,\cdots,h_d)\in\R^d\setminus \{0^d\}$, we define its normalized form $\bar{\vec{h}}=(\bar{h}_1,\cdots,\bar{h}_d)$, known as layer normalization operation, by $\bar{h}_i=\frac{h_i-\mu}{\sigma}$ where $\mu=\sum_{i=1}^d h_i/d$ and $\sigma=\sqrt{\sum_{i} (h_i-\mu)^2/d}$. 
Given a vocabulary $\V$, let $\V^n$ denote the set of length-$n$ strings over $\V$, and let $\V^*:=\bigcup_{n=1}^{+\infty} \V^n$ denote the set of all finite strings over $\V$.

\subsection{Turing machines}

A (single-tape) Turing machine (TM) is defined as a tuple $\langle\Sigma,Q,\delta\rangle$ where 
\begin{itemize}%[leftmargin=16pt]
\item $\Sigma$ is the tape alphabet, including a designated ``blank'' symbol $\perp$, a designated ``start'' symbol $\triangleright$, and numbers 0 and 1.
\item $Q$ is a finite set consisting of possible states. We assume $Q$ contains a designated start state $q_{start}$ and a designated halting state $q_{halt}$.
\item $\delta: Q\times \Sigma\rightarrow Q\times \Sigma\times \{\mbox{Left},\mbox{Right}\}$ is a transition function describing the rules used in performing each step of the TM.
\end{itemize}
The machine has a single tape that is infinite to the right and bounded on the left. The tape is equipped with a tape head. Initially, the tape (from the leftmost cell rightward) contains the start symbol $\triangleright$, a finite $\{0,1\}$-string $x$ (the input), and the blank symbols $\perp$ on the rest of its cells. The head starts at the leftmost tape cell, and the machine starts in the state $q_{start}$. The computation then repeats the following step until the machine enters $q_{halt}$: if the machine is in some state $q\in Q$ and is reading symbol $\sigma$ from the tape, and if $\delta(q,\sigma)=(q',\sigma',z)$ for some $z\in\{\mbox{Left},\mbox{Right}\}$, then the machine replaces $\sigma$ with $\sigma'$ in the current cell, changes state from $q$ to $q'$, and moves the tape head one cell in direction $z$.

\paragraph{Space complexity}
Let $s:\mathbb{N}\to \mathbb{N}$ be a non-decreasing function. The complexity class $\SPACE[s(n)]$ consists of all decision problems $\{0,1\}^*\rightarrow\{0,1\}$ that can be decided by a single-tape Turing machine using at most $O(s(n))$ tape cells on inputs of length $n$.\footnote{We implicitly assume that $s(n)\geq n$, since $n$ tape cells are required to load an input of length $n$.} The class $\pspace:= \bigcup_{k \in \mathbb{N}} \mathsf{SPACE}[n^k]$ is defined as the class of decision problems that can be solved in polynomial space.

\subsection{Post machines}
A Post machine (PM) \citep{post1} is defined as a tuple $\langle\Sigma,Q,\delta\rangle$ where 
%has a tape, which is infinite to the right and bounded on the left, and holds two heads, both moving from left to right. A PM can be specified by a tuple $\langle \Gamma, Q, \delta \rangle$ where  
\begin{itemize}%[leftmargin=16pt]
\item $\Sigma$ is the tape alphabet, including a designated ``blank'' symbol $\perp$, a designated ``start'' symbol $\triangleright$, and numbers 0 and 1.
\item $Q$ is a finite set consisting of possible states. We assume $Q$ contains a designated start state $q_{start}$ and a designated halting state $q_{halt}$.
\item $\delta: Q \times \Gamma \to Q \times \Gamma\times \{\mbox{Stay},\mbox{Right}\}$ is the transition function, specifying how the machine updates the tape and state.
\end{itemize}
The machine has a single tape that is infinite to the right and bounded on the left. The tape is equipped with two tape heads: the \emph{front head} and the \emph{rear head}. Initially, the tape (from leftmost cell rightward) contains the start symbol $\triangleright$, a finite $\{0,1\}$-string $x$ (the input), and the blank symbols $\perp$ on the rest of its cells. The front head starts at the leftmost tape cell, the rear head starts at the right end of the input, and the machine starts in state $q_{start}$. The computation repeats the following step until the machine enters $q_{halt}$: if the machine is  in state $q\in Q$ with the front head reading symbol $\sigma$ from the tape, and if $\delta(q,\sigma)=(q',\sigma',z)$ for some $z\in\{\mbox{Stay},\mbox{Right}\}$, then the machine (i) changes the state from $q$ to $q'$, (ii) moves the front head either one cell to the right if $z=\mbox{Right}$, and (iii) moves the rear head one cell to the right, and then write $\sigma'$. 

One can see that a PM can also be modeled as an automaton equipped with a queue of unbounded size: the queue is initialized as the input $x$, the front head points to the first element of the queue, and the rear head points to the last one. In each step, the PM reads and deletes the first element of the queue and may append a string to the tail.

\paragraph{Turing completeness of PM} 
The model of PM is Turing complete, i.e., equivalent in computational power to TMs. Specifically, 
\begin{theorem}\label{thm:pmtm}%\citep{post1,davis1994computability}
Given any single-tape TM running in $t(n)$ time and $s(n)$ space, there is an equivalent PM running in $O(t(n)\times s(n))$ time and using a $s(n)$-size queue.
\end{theorem}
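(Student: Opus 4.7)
The plan is to store the TM's tape contents inside the PM's queue and to simulate each TM step by exactly one full rotation of the queue. Work over the alphabet $\Gamma = \Sigma \cup \{\hat{\sigma} : \sigma \in \Sigma\} \cup \{\#\}$, where $\hat{\sigma}$ marks the cell currently under the TM head and $\#$ is a single boundary symbol. Maintain the invariant that just before the PM simulates TM step $k$, the queue reads $c^{(k)}_1 c^{(k)}_2 \cdots c^{(k)}_{\ell_k} \#$, with exactly one $c^{(k)}_i$ carrying the hat, $\ell_k = O(s(n))$, and the PM's finite-state control storing the TM state $q_k$. An initial $O(n)$-step preprocessing phase converts the PM's starting queue $\triangleright x_1 \cdots x_n$ into this form by hatting $\triangleright$ and appending the trailing $\#$.

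To simulate one TM step, the PM performs a single rotation using only Right transitions, so the queue length stays constant. Over the rotation it reads each front symbol and writes one symbol back; in the generic case it simply echoes. When it reads $\hat{\sigma}$, it retrieves $q_k$ from its state and applies $\delta(q_k, \sigma) = (q_{k+1}, \sigma', z)$, arranges for $\sigma'$ (unmarked) to replace $\hat{\sigma}$ in the queue, and plans to place the hat on the correct neighbour of position $i$. When it reads $\#$, the rotation is complete and it writes $\#$ back, restoring the invariant for step $k+1$.

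The main technical obstacle is the Left move: the neighbour that must receive the hat has already been emitted to the back of the queue by the time the head cell is read. To overcome this, the PM runs the rotation with a one-symbol look-behind buffer $b$ in its finite-state control. On each step it reads a symbol $s$ from the front, writes the stored $b$ to the back, and then updates $b \leftarrow s$. When $s = \hat{\sigma}$ and the TM transition is Left, the previous cell $c_{i-1}$ is still sitting in $b$, so the PM writes $\hat{c}_{i-1}$ instead of $c_{i-1}$ and sets $b \leftarrow \sigma'$; for a Right move the PM writes $b$ unchanged, sets $b \leftarrow \sigma'$, and raises a flag so that the next symbol entering the buffer is emitted with a hat on the following step. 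A constant number of additional states handles buffer initialization, the flushing of the boundary symbol $\#$ at the end of each rotation, and the edge cases of the head sitting at the left boundary or stepping Right into an unmaterialized blank just past the queue's end (in the latter case the PM lengthens the queue by emitting an extra $\hat{\bot}$, keeping $\ell_k \leq s(n) + O(1)$ throughout).

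Each TM step thus corresponds to one rotation of $\ell_k + O(1) = O(s(n))$ PM steps; summing over $t(n)$ TM steps yields total PM running time $O(t(n) \cdot s(n))$, while the queue length never exceeds $s(n) + O(1)$. The PM enters its halting state precisely when the simulated TM enters $q_{halt}$, proving the theorem.
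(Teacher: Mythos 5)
Your construction is correct and proves the stated time and space bounds, but it uses a different invariant from the one the paper sketches. The paper's (very brief) argument keeps the TM head cell at the \emph{front} of the queue: a Right move is then a single PM step (pop $\sigma$, push $\sigma'$), while a Left move forces a full rotation that cycles the predecessor cell around to the front. You instead mark the head position with a hatted symbol somewhere \emph{inside} the queue and pay for a full rotation on \emph{every} TM step, using a one-symbol look-behind buffer to retroactively hat the predecessor on a Left move and a flag to hat the successor on a Right move. Both give $O(s(n))$ per TM step and hence $O(t(n)\cdot s(n))$ total, and both keep the queue length at $s(n)+O(1)$, so the asymptotics match; the paper's version is faster on Right-heavy runs (constant cost per Right move) while yours is uniform and arguably simpler to verify since Left and Right are handled symmetrically. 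One detail worth spelling out in your write-up is buffer initialization at the start of each rotation: the PM must write something at the tail on its very first Right step of the rotation before $b$ has been loaded. This is fixable with a constant number of states --- e.g. hard-code the first write to be $\#$ (which is always what the final cell will contain), so the $\#$ effectively migrates one position per rotation and the invariant closes up --- but as stated the sketch leaves the first tail-write undefined.
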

This theorem has been implicitly proved in several places, e.g., \citep{pmtm2} and Section 9.1 of \citep{pettorossi2014elements}. For completeness, we present a proof in Appendix \ref{app:pmtm}.
The intuition is as follows. The PM's queue stores the non-blank part of the TM's tape in a cyclic manner: If the TM head changes a symbol $\sigma$ to $\sigma'$ and moves right, the queue shifts by deleting $\sigma$ from the front and appending $\sigma'$ to the end. If the TM head moves left, the queue cyclically rotates by moving the last symbol to the front, where the appended symbol may need to depend on the leftmost \textbf{two} queue symbols. 
To avoid explicitly pre-reading the next queue cell, we apply a one-step \emph{delayed append} trick: initially, pop the leftmost queue symbol and store it in the finite-state controller, without appending; thereafter, at each step, (i) pop the next leftmost symbol, and (ii) append a symbol determined by the two most recently popped symbols.

\subsection{Transformers}\label{sec:transformer}
For simplicity of notation, we present the definition of a single-head transformer, since only single-head attention is used in our proof. The definition of general multi-head transformers is deferred to Appendix~\ref{app:trans}.

Let $\mathcal{V}$ be a finite vocabulary. A decoder-only transformer is a parameterized neural network $\tf_{\theta}$ mapping from $\V^*$ to $\V$. %The decoder-only Transformer model considered in this paper is very similar to GPT style architectures \citep{radford2019language}. 
Specifically, a transformer is a composition $\tf:=\out \circ \dec_{L-1} \circ\cdots\circ \dec_{0} \circ \pos \circ \emb$ of four kinds of layers. Given a sequence $\vec{v}=(v_1,\cdots,v_i)\in \V^i$, it computes the next token $v_{i+1}$ as follows:

\textbf{1. Token embedding layer ($\emb$)}: For $j\leq i$, map each $v_j\in \V$ to a vector $\emb(v_j)\in \R^d$. Here, $d$ is called the \emph{embedding size}.

\textbf{2. Positional encoding layer ($\pos$)}: For $j\leq i$, add a positional encoding $\pos(i-j)\in\R^d$ to the token embedding $\emb(v_j)$, resulting in the initial input representation $\vec{h}^{0}_j:=\emb(v_j)+\pos(i-j)$. Here, we assume a relative positional encoding scheme, where the encoding depends only on the distance between $v_j$ and $v_i$. %rather than on the absolute position of $v_j$.

\textbf{3. Decoder Layer ($\dec_\ell$ for $\ell=0,\cdots, L-1$)}: Each decoder layer $\dec_{\ell}$ consists of two sublayers: an self-attention layer, followed by a fully-connected feed-forward network. Residual connections %and layer normalization 
are applied around each sub-layer \footnote{For simplicity, we ignore layer normalizations from the standard transformer architecture \citep{radford2019language} in our analysis. With a little more technical treatment as in \citep{tengyu}, our results can be extended to transformers with layer normalizations.}. Here, following common practice \citet{perez,ashish,ashish2,prompt,pencil}, we use $\hardmax$ as a realistic abstraction of $\softmax$. In fact, $\hardmax$ is the instance-wise limit of zero temperature limit of $\mathsf{softmax}$ \citep{pencil}. Consider a vector $\vec{s} \in \R^n$. If the maximum value in $\vec{s}$ appears at $t$ positions, then the $\hardmax$ function is defined coordinate-wise as:
\[
\hardmax(\vec{s})_j:= 
\begin{cases}
\frac{1}{t}, & \text{if } s_j = \max_k s_k, \\
0, & \text{otherwise}.
\end{cases}
\]

%Residual connections and layer normalization are applied around each sub-layer.

\textit{3.1. Single-head self-attention layer}: Compute the attention score
\[
\vec{s}_{i}^\ell=\hardmax\left(\langle \vec{h}^\ell_1\cdot \vec{Q}^{\ell},\vec{h}^\ell_i\cdot \vec{K}^{\ell} \rangle,\cdots,\langle \vec{h}^\ell_i\cdot \vec{Q}^{\ell},\vec{h}_i^\ell\cdot \vec{K}^{\ell} \rangle\right),
\]
and then
\[
\vec{a}^{\ell}_{i}=\sum_{j=1}^{i} s_{i,j}^\ell\cdot v^{\ell}(\vec{h}^{\ell}_i), \mbox{ where } v^{\ell} (h):=\vec{h}\cdot \vec{V}^{\ell} \mbox{ and } s^\ell_{i,j} \mbox{ is the } j\mbox{-th entry of } \vec{s}_{i}^\ell. 
\]
Here, $\vec{Q}^{\ell},\vec{K}^{\ell},\vec{V}^{\ell}\in\mathbb{R}^{d\times d}$ are parametrized matrices. Then, this sublayer returns 
\[
\vec{h}_i^{\ell+0.5}:=\vec{W}^\ell\cdot \vec{a}^{\ell}_i + \vec{b}^\ell + \vec{h}_{i}^{\ell},
\]
where $\vec{W}^\ell\in\R^{d\times d}$ and $\vec{b}^\ell\in \R^d$ are parametrized.

\textit{3.2. Feed-forward network}: %Apply a fully-connected feed-forward neural network to the vector $\vec{a}^{\ell}_n$ and obtain $\vec{h}^{\ell+1}_n\in\R^d$. 
Apply a multi-layer fully-connected $\ReLU$ neural network $\FF$ to $\vec{h}^{\ell+0.5}_i$, and returns
\[
\vec{h}^{\ell+1}_i=\FF(\vec{h}_i^{\ell+0.5})+\vec{h}_i^{\ell+0.5}.
\]

\textbf{4. Output layer ($\out$)}: The final output representations $\vec{h}^{L}_i$ are projected onto the vocabulary space using a linear transformation followed by a $\arg\max$ function:
\[
v_{i+1}:=\out(\vec{h}^L_i)=\arg\max(W^{\out}\cdot \vec{h}^L_i + \vec{b}^{\out}), \mbox{ where } W^{\out}\in \R^{|\V|\times d} \mbox{ and } \vec{b}^\out\in\R^{|\V|}.
\]

%\textbf{4. Output layer ($\out$)}: The final output representations $\vec{h}^{L}_n$ are projected onto the vocabulary space using a linear transformation followed by a softmax function to obtain the probability distribution over the next token:
%\[
%P(x_{n+1} \mid x_1, \ldots, x_n) = \text{softmax}(z_n^{(L)} W_E^\top + b_E),
%\]
%where \( W_E \in \mathbb{R}^{|\mathcal{V}| \times d} \) is the output embedding matrix (often tied with the input embedding matrix), and \( b_E \in \mathbb{R}^{|\mathcal{V}|} \).

We say that a transformer operates with a \emph{context window} of length $s$ if the generation of $v_{i+1}$ depends only on the last $s$ tokens $v_{i-s+1},\cdots,v_i$. A transformer is said to be of $p$-bit precision if each parameter is represented using $p$ bits. We define the \emph{bit-size} of a transformer $\tf_{\theta}$ as $p \times |\theta|$, meaning that the entire model can be represented by using $p \times |\theta|$ bits. In this paper, when we refer to transformers, we mean transformers that are allowed to generate arbitrarily long immediate CoT steps, unless stated otherwise.

\begin{definition}\label{def:window}
We define $\window[s(n)]$ as complexity class consisting of the problems $\{0,1\}^*\rightarrow\{0,1\}$ that can be solved by a constant bit-size transformer with context window of length $O(s(n))$ on inputs of length $n$.
\end{definition}

\section{Proof of main results}\label{sec:proof}
In this section, we prove Theorem \ref{thm:main0}, which characterizes the expressive power of constant bit-size transformers with context window of length $s(n)$ in terms of the complexity class $\SPACE[s(n)]$. In particular, it implies that constant bit-size transformers are Turing complete. 

%In this section, we prove Theorem 1, which characterizes the expressive power of constant bit-size transformers with context window length s(n) in terms of the complexity class SPACE[s(n)]. Our approach is to construct a constant-size transformer that simulates a space-s(n) bounded Turing machine, using an intermediate Post machine model as a bridge. This transformer construction constitutes the main technical part of our proof

First, we prove that any constant bit-size transformer with a context window of length $s(n)$ can be simulated by a Turing machine using $O(s(n))$ space.
\begin{theorem}\label{thm:upper}
For any non-decreasing function $s(n)\geq n$, $\window[s(n)]\subseteq \SPACE[s(n)]$.
\end{theorem}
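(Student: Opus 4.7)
The plan is to have a Turing machine $M$ maintain the current transformer context window on its work tape and simulate the CoT generation token by token. Since the transformer $\tf$ has constant bit-size and its generation of $v_{i+1}$ depends only on the last $O(s(n))$ tokens, the full ``state'' of $\tf$ at any time is captured by this sliding window together with the fixed parameters, which $M$ carries in its finite control. After $\tf$ emits a token, $M$ appends it at the right end of the window and, if the window exceeds length $s(n)$, shifts the contents left to drop the oldest token. This continues until $\tf$ emits its designated answer token. The window alone occupies $O(s(n))$ cells; all that remains is to show that one transformer step can be carried out using only $O(s(n))$ additional space.

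For this I would first establish a precision bound. Because the parameters are constant-size rationals and the nonlinearities are $\hardmax$ (whose weights have the form $1/t$ for an integer $t\le s(n)$), $\ReLU$, and affine combinations, a straightforward induction on layers shows that every coordinate of every intermediate representation $\vec{h}^\ell_j$ is a rational whose denominator is bounded by a fixed polynomial in $s(n)$, and hence representable in $O(\log s(n))$ bits. The relative positional encoding $\pos(i-j)$ is assumed to be computable from an explicit formula in $O(\log s(n))$ space. Next, I would give a space-reusing, recursive evaluation of $\vec{h}^\ell_j$: the machine recursively computes the query $\vec{h}^{\ell-1}_j$ and saves it in a scratch buffer; then in one pass over $j' \le j$ it recomputes $\vec{h}^{\ell-1}_{j'}$ on the fly to find the maximum attention score; a second pass counts the number $t$ of positions attaining that maximum; a third pass accumulates the hardmax-weighted values $v^\ell_k(\vec{h}^{\ell-1}_{j'})$; and finally $\FF$ and the residual connection are applied. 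The recursion depth is $L=O(1)$, and each stack frame stores only a constant number of $O(\log s(n))$-bit rationals, so the working memory needed for one transformer step is $O(\log s(n))$.

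Combining the two pieces, the total space used by $M$ is $O(s(n)) + O(\log s(n)) = O(s(n))$, and the answer bit output by $M$ coincides with that of $\tf$. The main technical obstacle, I expect, is the careful verification of the precision bound, namely that the rational denominators really remain polynomial in $s(n)$ through $L$ layers and through both the hardmax and the feed-forward compositions, together with ensuring that the recursive simulation never materializes all $s(n)$ intermediate vectors at the same time; a naive implementation that does so would cost $\Theta(s(n)\log s(n))$ bits and break the space budget.
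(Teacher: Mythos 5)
Your high-level approach matches the paper's: maintain a sliding window of the latest $s(n)$ tokens on the work tape, simulate one transformer step to produce the next CoT token, slide the window, and repeat until the answer token appears. The paper gives only a sketch, allotting ``an additional $O(s(n))$ working space'' per step, while you supply the mechanics. The observation you flag at the end --- that naively materializing all $s(n)$ intermediate vectors would exceed the budget --- is a real subtlety, and the three-pass, space-reusing recursive evaluation is the right fix.

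However, your precision bound is not correct as stated. You claim every coordinate of every $\vec{h}^\ell_j$ is a rational whose denominator is polynomial in $s(n)$, hence $O(\log s(n))$ bits. That holds for one decoder layer, where $\hardmax$ contributes a single factor $t\le s(n)$. But for $L\ge 2$, the attention aggregation at layer $\ell$ sums values $v(\vec{h}^{\ell-1}_j)$ over positions $j$ whose denominators carry different tie-counts $t^{\ell-1}_j$ from the previous layer; writing that sum as one fraction needs a common denominator divisible by $\mathrm{lcm}_j\bigl(t^{\ell-1}_j\bigr)$, and $\mathrm{lcm}(1,\dots,s(n)) = e^{\Theta(s(n))}$. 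So the denominators, and hence the per-coordinate representations, can be $\Theta(s(n))$ bits, not $O(\log s(n))$. This does not sink the proof: the \emph{values} stay $O(1)$ in magnitude (each attention output is a convex combination, and the feed-forward, residual, and output maps are a constant number of constant-bit affine/ReLU operations), so numerators are also $O(s(n))$ bits; your recursion keeps only $O(L)=O(1)$ vectors of $O(1)$ coordinates alive at once, each of $O(s(n))$ bits, giving $O(s(n))$ working space rather than $O(\log s(n))$ --- still within budget, and in fact matching the paper's stated $O(s(n))$. You should correct the denominator bound, and make explicit that the running partial sums are kept over the common denominator $\mathrm{lcm}(1,\dots,s(n))$ (or reduced after each addition) so they never grow past $O(s(n))$ bits; with that repair the argument is sound.
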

\begin{proof}
Let $L:\{0,1\}^*\to\{0,1\}$ be a decision problem that can be solved by a constant bit-size transformer using a context window of length $s(n)$. We sketch a Turing machine using $O(s(n))$ space that solves $L$ by simulating the transformer. The Turing machine maintains a buffer of size $s(n)$ storing the latest $s(n)$ tokens in the transformer's context. To compute the next token generated by the transformer, the Turing machine simulates the computation procedure described in Section \ref{sec:transformer}, using an additional $O(s(n))$ working space. After generating the next token, the Turing machine updates the buffer by appending the new token and deleting the oldest token that slides out of the window, and then cleans the working space. By iteratively repeating this procedure, one can see that the Turing machine can faithfully simulate the transformer's behavior and solve the problem $L$.
\end{proof}

In the following, we prove the reverse direction, which is the main technical part.
\begin{theorem}\label{thm:lower}
Let $\tm$ be a single-tape Turing machine that, on input $x\in\{0,1\}^n$, uses at most $s(n)$ space and runs for at most $t(n)$ steps. There exists a constant bit-size transformer with a context window of length $O(s(n))$ that, on input $x$, takes $O(t(n)\cdot s(n))$ CoT steps and then outputs $\tm(x)$.

As a consequence, we have $\window[(s(n))]\supseteq \SPACE[s(n)]$ for $s(n)\geq n$.
\end{theorem}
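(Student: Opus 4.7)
The plan is to apply Theorem \ref{thm:pmtm} to reduce TM simulation to PM simulation, then build a transformer whose sliding context window directly implements the PM's queue. Following the intuition after Theorem \ref{thm:pmtm}, each TM right-move becomes one PM \emph{Right} operation and each TM left-move becomes $O(s(n))$ cyclic-rotation \emph{Right}s, so the resulting PM uses only \emph{Right} operations, runs for $T = O(t(n) s(n))$ steps, and keeps its queue bounded by $s(n)$. A single \emph{Right} step (delete the front symbol, append a new back symbol) matches exactly one CoT generation of a transformer whose context window of length $W = s(n) + O(1)$ is already full: appending the new token pushes the oldest token out.

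I would take tokens from $Q \times \Sigma$, so that each generated CoT token encodes both the new PM state and the new back symbol. The transformer uses two hardmax attention heads: one attends to offset $0$ to fetch the current state $q$ from the most recent token, and the other attends to offset $W-1$ to fetch the front-of-queue symbol $\sigma$ from the oldest in-window token. A constant-size ReLU feed-forward network implements the finite lookup $\delta(q,\sigma) = (q',\sigma')$, and a linear output head emits the resulting token $(q',\sigma')$. Since $Q$, $\Sigma$, and $\delta$ are fixed, every matrix has constant dimension with constant-bit entries.

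A short setup phase of length $O(s(n))$ converts the input context $x_1 \cdots x_n$ into the PM's initial queue $\triangleright, x_1, \ldots, x_n, \perp, \ldots, \perp$, by iterating an attention-based copy of each input symbol and then appending blank symbols; by the end of setup, the original input has slid out of the window and the window contents equal the PM's initial queue. The PM is then simulated for $T$ steps, and as soon as the most recent token encodes $q_{halt}$ the transformer emits $\tm(x)$ as its final CoT token. This yields total CoT length $O(s(n) + T) = O(t(n) s(n))$ with context window $O(s(n))$. The consequence $\SPACE[s(n)] \subseteq \window[s(n)]$ follows immediately, since any $s(n)$-space TM runs in time $2^{O(s(n))}$.

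I expect the main obstacle to be the relative positional encoding. The transformer must, using only constant bit-size parameters, reliably identify both offset $0$ and offset $W-1$ even though $W$ itself grows with $n$. I would handle this by defining $\pos(\cdot)$ through an explicit formula parameterized by $W$ that takes only a constant number of distinct values---for instance, one indicator coordinate for offset $0$, another for offset $W-1$, and zero in the remaining coordinates for all other offsets---and by engineering the query/key matrices so that under hardmax the inner-product score at the targeted offset strictly dominates all others. Once this ``peek at the back'' and ``peek at the front'' mechanism is verified to be robust, the remainder of the construction is a routine finite lookup, and no other parameter has to scale with $n$, $s(n)$, or $t(n)$.
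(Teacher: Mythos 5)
Your proposal is correct and follows essentially the same route as the paper: reduce the TM to a Post machine via Theorem~\ref{thm:pmtm}, identify the sliding context window with the PM's fixed-size queue, attend to the oldest in-window position via a relative positional encoding that distinguishes only offsets $0$ and $W-1$, and implement $\delta$ with a constant-size feed-forward lookup over $\Sigma\times Q$ tokens. The only differences are cosmetic---the paper needs just one attention head (the current state already arrives via the residual connection, so your offset-$0$ head is redundant), and it sidesteps your explicit setup phase by tokenizing the input directly as $(x_i,q_{start})$ pairs so that the $\#$-padding up to window length $s(n)$ falls out of the same attention rule automatically.
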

\begin{proof}
Let $\tm$ be a single-tape Turing machine that runs in $t(n)$ steps and uses at most $s(n)$ tape cells. By Theorem \ref{thm:pmtm}, the $\tm$ can be simulated by a Post machine %$\pm = (\Sigma,Q,\delta)$ 
that runs in $O(t(n) \cdot s(n))$ time and uses a queue of size $s(n)$. We slightly adapt the Post machine to ensure that the queue size remains exactly $s(n)$ throughout the computation, or equivalently, that the distance between the front and rear heads on the tape remains fixed at $s(n)$. Specifically, we pad the input $x$ with $s(n)-n$ copies of a special symbol $\#$ to the right initially, and in each computation step, both the front and rear heads always move one cell to the right. %Through almost the same proof as in Theorem \ref{thm:pmtm}, 
One can easily see that such an adapted $\pm = (\Sigma,Q,\delta)$ simulates the Turing machine as well. Furthermore, without loss of generality, we assume $\Sigma = \{0,1,\#\}$, $Q=\{0,1\}^c$ for some $c\in\N$, and that $q_{start}\notin \delta(\sigma,q)$ for any $(\sigma,q)$, meaning that once the computation starts, the Post machine leaves $q_{start}$ immediately and never comes back.   

%Let $\tm$ be a single-tape TM running in $t(n)$ time and at most $s(n)$ space. By Theorem \ref{thm:pmtm}, there is an equivalent Post machine $\pm=(\Sigma,Q,\delta)$ running in $O(t(n)\times s(n))$ time and using a $s(n)$-size queue. Here, we slightly adapt $\pm$ to fix the queue size to always be $s(n)$ during the computation, or to let the distance between the front head and rear head always be $s(n)$: initially, we pad $s(n)-n$ $\#$-symbols right to the input $x$; and in each step, both the front and the rear heads move one cell to the right. In addition, we assume $\Sigma=\{0,1,\#\}$ without loss of generality.    

We now construct a constant bit-size transformer $\tf$ with a context window of length $s(n)$ to faithfully simulate $\pm$. The intuition is as follows:
\begin{itemize}
\item The $s(n)$-size queue is simulated by the $s(n)$-long context window, where the first element in the queue (or the tape cell pointed by the front head) %and the state 
corresponds to the oldest token in the window, and the last element in the queue (or the tape cell pointed by the rear head) corresponds to the newest token. %Besides, we will let the token track the state information of $\pm$ as well. 
Besides, we will let the vocabulary of the transformer be $\V=\Sigma\times Q$, so that a token also tracks the $\pm$ state information. 
\item As mentioned above, in each computation step of $\pm$, the first element in the queue is removed, and a new element is added, so that the queue size remains the same. Correspondingly, in each CoT step of the transformer, the oldest token in the window slides out, and a new token is appended.
\item The transition function of $\pm$ takes the last element in the queue and the current state as input, and outputs the next state and the next element. Correspondingly, by carefully choosing the parameters, the self-attention layer retrieves the last element from the oldest token in the window, and the current state from the current token; subsequently, a feed-forward network is used to implement the transition function $\delta$. % by carefully designing the token embedding layer, the relative positional encoding, and the parameters in the self-attention layer will be carefully chosen, such that only the token under the front head receives attention and is retrieved.
\end{itemize}

We formally define the transformer $\tf$ as follows: let $\V=\Sigma\times Q=\{0,1,\#\}\times \{0,1\}^c$. 

\underline{1. Token embedding layer}\quad Map each token $v=(\sigma,q)$ to a vector $\emb(\sigma,q)\in \R^{c+8}$ as follows: 
\begin{flalign}\label{eq:emb}
\emb(\sigma,q)=\begin{cases}
 (1,1,0,0,q,0,0,0,0), & \text{if } \sigma=0;\\
 (1,0,1,0,q,0,0,0,0), & \text{if } \sigma=1;\\
 (1,0,0,1,q,0,0,0,0), & \text{if } \sigma=\#.
 \end{cases}
 \end{flalign}

%$\emb(1,q)\rightarrow (0,0,1,1,q)$, $\emb(0,q)\rightarrow (0,0,-1,1,q)$, and $\emb(\#,q)\rightarrow(0,1,0,1,q)$. %Note that the embedding size $d=c+4$ is a constant.

\underline{2. Positional encoding layer}\quad Add a relative positional encoding $\pos(i-j)\in\R^{c+8}$ to the token embedding $\emb(v_j)$ where
\begin{flalign}\label{eq:pos}
\pos(i-j)=\begin{cases}
 \vec{0}_{c+8}, & \text{if } 0<i-j<s(n)-1;\\
 (\vec{0}_{c+7},1), & \text{if } i-j=0,\\
 (\vec{0}_{c+7},2), & \text{if } i-j=s(n)-1,
 \end{cases}
 \end{flalign}
Then return $\vec{h}_j^0=\emb(v_j)+\pos(i-j)$.
% $\pos(s(n))=(1,0^{c+4})$ and $\pos(i)=0^{c+5}$ for any other $i\geq 0$.

\underline{3. Decoder layer}\quad The transformer has only one decoder layer, and the decoder layer has only one attention head. In the self-attention layer, we set the matrix $\vec{K},\vec{Q},\vec{V},\vec{W}\in\R^{d\times d}$ and $\vec{b}\in\R^d$ where $d=c+8$ as follows:
\begin{itemize}
\item Matrix $\vec{K}$: it has only one non-zero entry $K_{c+8,1}=1$;
\item Matrix $\vec{Q}$: it has only one non-zero entry $Q_{c+8,c+8}=1$;
\item Matrix $\vec{V}$: it has four non-zero entries: $V_{c+5,2}=V_{c+6,3}=V_{c+7,4}=V_{c+8,c+8}=1$.
\end{itemize}
Besides, we set the matrix $\vec{W}\in \R^{d\times d}$ to be the identity matrix, and $\vec{b}=\vec{0}_{c+8}$.
%Then self-attention layer computes
%\[
%\vec{s}_{n}=\hardmax\left(\langle \vec{h}_1\cdot \vec{Q},\vec{h}_n\cdot \vec{K}\rangle,\cdots,\langle \vec{h}_n\cdot \vec{Q},\vec{h}_n\cdot \vec{K}\rangle\right)\in \R^n,
%\]
%and
%\[
%\vec{a}_{n}=\sum_{i=1}^{n} s_{n,i}^\ell\cdot v(\vec{h}^{\ell}_n), \mbox{ where } v(h):=\vec{h}\cdot \vec{V}.\]
%Then 

%One can check that $\vec{s}_n[i]=1$ if $i=n-s(n)$ and $=0$ otherwise, thus $\vec{a}_n=\vec{h}_{t-s(n)}$ if $t\geq s(n)$ and $=\vec{0}_d$ otherwise.  

The feed-forward network $\FF$ is designed to %implement the following procedure: let $\vec{h}$ be the input of $\FF$,
%\begin{enumerate}
%\item Update $\vec{h}$ by letting $\vec{h}_{c+5:c+7}:=\vec{h}_{c+5:c+7}-\vec{h}_{2:4}$. 
%\item Then, 
simulate transition function $\delta$. Specifically, we let
\begin{flalign}\label{eq:ff}
\FF(\vec{h})+\vec{h}=\begin{cases}
 \vec{e}_{(\#,q_{start})}, & \text{if } h_{c+8}=2;\\
 \vec{e}_{\delta(0,q)}, & \text{if } \vec{h}_{4:c+8}=(q,1,0,0,3);\\
 \vec{e}_{\delta(1,q)}, & \text{if } \vec{h}_{4:c+8}=(q,0,1,0,3);\\
 \vec{e}_{\delta(\#,q)}, & \text{if } \vec{h}_{4:c+8}=(q,0,0,1,3);
\end{cases}
\end{flalign}
Here, $\vec{e}_{(\sigma',q')}\in\R^{|\V|}$ is the unit vector where the entry indexed by $(\sigma',q')$ is 1 and any other entry is 0.
%\end{enumerate}
 %Note that such a feed-forward network exists, since (i) this procedure can be realized by a Boolean circuit and (ii) any Boolean circuit can be computed by a finite precision $\ReLU$ network \citep{DBLP:journals/tnn/Paturi96}.

%it maps the vector $(\vec{*}^)$ $\vec{h}_{t-s(n)}$ to $(\sigma',q')=\delta(\sigma,q)$ and output. 

\underline{4. Output layer}\quad We set $W^{\out}$ to be the identity matrix and $\vec{b}^{\out}$ the all-zeros vector. Then the output layer outputs
$v_{i+1}:=\arg\max(\vec{h}^1_i)$. %, \mbox{ where } W^{\out}\in \R^{|\V|\times d} \mbox{ and } \vec{b}^\out\in\R^{|\V|}.

In the following, we show that the transformer $\tf$ faithfully simulates the Post machine $\pm$. Let $(\sigma_1,q_1),(\sigma_2,q_2),\cdots$ denote the execution log of $\pm$ running on $x\in\{0,1\}^n$, where $\sigma_i$ is the $i$-th element added to the queue and $q_i$ is the state when adding $\sigma_i$. Note that $(\sigma_i,q_i)=(x_i,q_{start})$ for $i\leq n$, $(\sigma_i,q_i)=(\#,q_{start})$ for $n+1\leq i< s(n)$, and $\delta(\sigma_{i-s(n)+1},q_i)=(\sigma_{i+1},q_{i+1})$ for $i\geq s(n)$. 
Let $v_1, v_2, \dots$ denote the token sequence in the context of $\tf$ when it takes $v_1 = (x_1, q_{start}), \dots, v_n = (x_n, q_{start})$ as input.
 We show by induction that for each $i\geq 1$, we have $v_i=(\sigma_i,q_i)$.

\underline{Base case: $i\leq n$}\quad This case is trivial. 

\underline{Inductive case: $n+1\leq i<s(n)$}\quad  In the self-attention layer, observing that $\vec{K}\cdot \vec{h}^0_i=(\vec{0}_{c+7},1)=1$ and 
\begin{flalign}
\vec{Q}\cdot \vec{h}^0_j=\pos(i-j)=\begin{cases}
 \vec{0}_{c+8}, & \text{if } 1\leq j<i,\\
 (\vec{0}_{c+7},1), & \text{if } j=i,
 \end{cases}
\end{flalign}
we have
\begin{align*}
\vec{s}_{i} = &\hardmax\left(\langle \vec{h}^0_{1}\cdot \vec{Q},\vec{h}^0_i\cdot \vec{K}\rangle,\cdots,\langle \vec{h}^0_i\cdot \vec{Q},\vec{h}^0_i\cdot \vec{K}\rangle\right)=(0,\cdots,0,1)\in\R^i
\end{align*}
and
\[
\vec{a}_{i}=\sum_{j=1}^{i} s_{i,j}\cdot \left(\vec{V}\cdot \vec{h}^0_j\right)=\sum_{j=1}^{i} s_{i,j}\cdot \left(\vec{0}_{c+4},\vec{h}^0_{j,2;4},h^0_{j:c+8}\right)=(\vec{0}_{c+4},\vec{h}_{i,2:4}^0,1).
\]
Then 
\[
\vec{h}_i^{0.5}=\vec{W}\cdot \vec{a}_i+\vec{b}+\vec{h}_i^0=\vec{a}_i+\vec{h}_i^0=(\vec{h}^0_{i,1:c+4},\vec{h}^0_{i,2:4},2).
\]

In the feed-forward network layer, $\vec{h}_i^{0.5}$ is mapped to $\FF(\vec{h}_i^{0.5})+\vec{h}_i^{0.5}$, which is $\vec{e}_{(\#,q_{start})}$ according to Equation (\ref{eq:ff}). Finally, the output layer outputs $(\#,q_{start})$ as desired.

\underline{Inductive case: $i\geq s(n)$}\quad  In the self-attention layer, observing that $\vec{K}\cdot \vec{h}^0_i=(\vec{0}_{c+7},1)=1$ and 
\begin{flalign}
\vec{Q}\cdot \vec{h}^0_j=\pos(i-j)=\begin{cases}
 0, & \text{if } i-s(n)+1<j<i;\\
 1, & \text{if } j=i,\\
 2, & \text{if } j=i-s(n)+1,
 \end{cases}
\end{flalign}
we have
\begin{align*}
\vec{s}_{i} =\hardmax\left(\langle \vec{h}^0_{i-s(n)+1}\cdot \vec{Q},\vec{h}^0_i\cdot \vec{K}\rangle,\cdots,\langle \vec{h}^0_i\cdot \vec{Q},\vec{h}^0_i\cdot \vec{K}\rangle\right)=(1,0\cdots,0)\in \R^{s(n)}
\end{align*}
and
\[
\vec{a}_{i}=\sum_{j=i-s(n)+1}^{i} s_{i,j}\cdot \left(\vec{0}_{c+4},\vec{h}^0_{j,2;4},h^0_{j:c+8}\right)=(\vec{0}_{c+4},\vec{h}_{i-s(n)+1,2:4}^0,2).\]
Then 
\[
\vec{h}_i^{0.5}=\vec{a}_i+\vec{h}_i^0=(\vec{h}^0_{i,1:c+4},\vec{h}^0_{i-s(n)+1,2:4},3).
\]

In the feed-forward network layer, noting that by the induction hypothesis, we have
\[
\vec{h}_{i,5:c+4}^{0.5}=\vec{h}_{i,5:c+4}^{0}=q_{i}, \text{ and } \vec{h}_{i,c+5:c+7}^{0.5}=\vec{h}^0_{i-s(n)+1,2:4}=\sigma_{i-s(n)+1}.
\]
The vector $\vec{h}_i^{0.5}$ is mapped to $\FF(\vec{h}_i^{0.5})+\vec{h}_i^{0.5}$, which is $\vec{e}_{\delta(\sigma_{i-s(n)+1},q_i)}$ according to Equation (\ref{eq:ff}). Recall that $\delta(\sigma_{i-s(n)+1},q_i)=(\sigma_{i+1},q_{i+1})$, then one can see that the output layer outputs $(\sigma_{i+1},q_{i+1})$ as desired. 
%  and by noting that $\vec{h}_{i,5:c+4}^{0.5}=\vec{h}_{i,5:c+4}^{0}=q_{i}$ and $\vec{h}_{i,c+5:c+7}^{0.5}=\vec{h}_{i-s(n)+1,2:4}=\sigma_i$, the feed-forward network $\FF$ maps $\vec{h}_i^{0.5}$ to $\vec{e}_{\delta(\sigma_i,q_{i})}=\vec{e}_{\delta(\sigma_{i+1},q_{i+1})}$.
%One can check that $\vec{s}_n[i]=1$ if $i=n-s(n)$ and $=0$ otherwise, thus $\vec{a}_n=\vec{h}_{t-s(n)}$ if $t\geq s(n)$ and $=\vec{0}_d$ otherwise.  
%we assume $v_j=\langle\sigma_j,q_j\rangle$ for every $j\leq i-1$, and aim to show $v_i=\langle\sigma_i,q_i\rangle$. 

Now, we have shown that $v_i=(\sigma_i,q_i)$ for any $i\geq 1$ and thus finished the proof.
\end{proof}

%\begin{remark}
%As we will see in the proof later, the chain-of-thought (CoT) complexity is also efficient. Specifically, let $M$ be a single-tape Turing machine that, on input length $n$, runs for at most $t(n)$ steps and uses $s(n)$ spaces. There is a decoder-only transformer that, on input $x$, takes $O(t(n)\cdot s(n))$ decoding steps and uses $s(n)$-size context window, outputs $M(x)$. also holds for PTM and multi-tape turing machines. 
%\end{remark}
\begin{remark}
As we can see from the proof, to simulate the given TM on longer inputs, what we need to do is just change the relative positional encoding according to Equation (\ref{eq:pos}), without changing all other model parameters.
\end{remark}
\begin{remark} We briefly compare our construction with existing Turing-completeness proofs and highlight why our construction achieves constant bit-size. In \citep{perez} and related follow-ups, the query may attend to many earlier tokens; when multiple positions tie for the maximum score, the hardmax outputs fractions such as $1/t$, which requires the precision to grow with $t$. The proof of \cite{tengyu} encodes the simulated circuit directly into the absolute positional encoding, rather than into the parameters. Consequently, as the circuit expands, the positional encoding must also grow in dimension, so the embedding size is no longer constant. 

In contrast, our construction leverages the automaton-like behavior of Post machines, which is simpler and more regular than the head movements of Turing machines or the wiring patterns of circuits. As a result, at each CoT step the query attends to exactly one token, specifically the token located $s(n)$ positions earlier. This fixed-offset attention can be implemented purely through a relative positional encoding, without any additional arithmetic operations.

In particular, the input to $\hardmax$ is always one-hot, so is the resulted attention distribution. As an alternative of $\hardmax$, one could replace $\hardmax$ with the right-most hardmax function \citep{DBLP:conf/nips/Yang0A24}.
\end{remark}

Combining Theorems \ref{thm:upper} and \ref{thm:lower}, we immediately conclude our main theorem. 
\maintheorem*

%\section{Extensions}\label{sec:extension}
%\subsection{Extension to Transformers with layer norm and residual connection}
%\subsection{Extension to Multi-tape Turing machine model}
%cyclic model.

\section{Conclusions and Discussions}\label{sec:conclusion}
In this work, we have shown that a constant bit-size transformer, with a sufficiently long context window, can simulate any Turing machine on arbitrarily long inputs. We further proved that the complexity class $\SPACE[s(n)]$ precisely characterizes the expressive power of constant bit-size transformers with a context window of length $s(n)$, suggesting that the context window length needs to scale only with the space complexity rather than the time complexity as suggested by previous results. Our proof leverages the behavioral similarity between transformers and Post machine, which offers new insights into the mechanism underlying the reasoning abilities of transformers. We list some future directions:
\begin{itemize}[leftmargin=16pt]
\item \textit{Simulation efficiency}: Our construction requires $O(t(n)\cdot s(n))$ CoT steps, which is potentially prohibitive for practical applications. It would be interesting and important to investigate whether this slowdown can be avoided without compromising optimality in other aspects.
\item \textit{Positional encodings}: Our construction employs an nonstandard relative positional encoding. It is open whether it can be replaced with fixed absolute positional encodings or other standard relative positional encodings. Moreover, our positional encoding explicitly depends on the assumed space upper bound, and it is unclear how this bound could be inferred automatically.
\item \textit{Learnability and empirical validation}: Our contribution is purely about expressiveness. Whether standard training procedures can learn such behavior is an important open problem. How well our construction behaves in practice also remains to be investigated.
\end{itemize}

\bibliographystyle{abbrvnat}
\bibliography{main.bib}

\appendix
\section{Modifications on Transformers in Turing Completeness Proofs}\label{app:mod}

Existing proofs of Turing-completeness for Transformers typically departure from the vanilla architecture and make specific modifications. \citet{perez} consider a Transformer with an absolute positional encoding given by the triplet $(i,1/i,1/i^2)$ at position $i$. In their construction, the attention score is defined as the negative absolute value of the dot product, and the attention mechanism uses average-hard attention. Moreover, the feed-forward layers employ sigmoid activations in place of ReLUs. \citet{ashish} dispense with positional encodings altogether and instead adopt \emph{Strict Causal Masking}, where attention at position $i$ can access all tokens up to position $i-1$ but not the current token $i$. Their construction also uses average-hard attention and introduces a \emph{Projected Pre-Norm}: an arbitrary linear projection is allowed before layer normalization. In the proof of \citet{tengyu}, the simulated circuit is directly encoded into the absolute positional encoding, rather than into the parameters. \citet{prompt} consider Transformers with nonstandard absolute positional encodings. In their construction, the query, key, and value maps in the attention sublayer are implemented as ReLU networks rather than linear transformations. This work employ nonstandard \emph{relative} positional encodings.

\section{Proof of Theorem \ref{thm:pmtm}}\label{app:pmtm}
Let $\tm=(\Sigma,Q,\delta)$ be a single-tape Turing machine that runs in time $t(n)$ and space $s(n)$. We construct an equivalent Post machine $\pm=(\Sigma',Q',\delta')$ that simulates $\tm$ within $O(t(n)\cdot s(n))$ time and uses a queue of size at most $s(n)$. 

\paragraph{Queue alphabet} We define the queue alphabet $\Sigma':=\{\sigma,\hat{\sigma},\tilde{\sigma}\mid \sigma\in \Sigma\}$. Here, $\hat{\cdot}$ and $\tilde{\cdot}$ are temporary markers used for simulating a left move of $\tm$'s head. Specifically, $\hat{\cdot}$ temporarily marks the current head cell, while $\tilde{\cdot}$ temporarily marks its left neighbor.

\paragraph{Delayed append trick} $\pm$ maintains a \emph{logical queue} $L$ which represents the non-blank segment of $\tm$'s tape in a cyclic manner. Formally, we realize $L$ as $L=R\circ Q_{\text{phys}}$ where
\begin{itemize}[leftmargin=16pt]
  \item $R$ is a register in the finite control storing the leftmost symbol of $L$, and
  \item $Q_{\text{phys}}$ is the actual queue content. Thus $|Q_{\text{phys}}|=|L|-1$.
\end{itemize} 
With this representation, the appended symbol is no longer restricted to depend only on the leftmost symbol of $L$; it can now be determined by the two leftmost symbols.

%Initialize by poping the leftmost queue symbol and storing it in the finite-state controller $Q$; but do not append yet. Then, at each step: (1) pop the next leftmost symbol; (2) append a symbol as a function of the two leftmost symbols just seen. Thus the symbol appended depends on the leftmost two queue symbols, without explicit pre-reading.
%we maintain the invariant: (logic queue) $=$ (materialized queue) $\circ Q$, where the newest queue symbol is held in the finite-state controller $Q$ and appended only in the next step. So that the symbol to be appended depends on the leftmost two symbols in the queue.

\paragraph{Initialization} Initially, $\tm$'s tape contains the start symbol $\triangleright$, followed by the input $x$, and then blanks. Accordingly, the logical queue $L$ is initialized to $(\triangleright,x)$: the first symbol $\triangleright$ is placed in the register $R$, and the remaining symbols $x$ are stored into $Q_{\text{phys}}$. During the simulation, we will keep the following invariant:
\begin{itemize}[leftmargin=16pt]
\item Right before each simulated step, the leftmost cell of $L$ correponds to $\tm$'s head cell; its symbol is either unmarked or carries the $\tilde{\cdot}$ mark, while all other cells of $L$ are unmarked.
\end{itemize}

\paragraph{Simulation of one step} Suppose $\tm$ is in state $q$ and its head reads $\sigma$ from cell $i$. Let $\delta(q,\sigma)=(q',\sigma',z)$ with $z\in\{\mbox{Left},\mbox{Right}\}$. This means that $\tm$ overwrites cell $i$ with $\sigma'$, change state from $q$ to $q'$, and moves its head one cell in direction $z$.
\begin{itemize}[leftmargin=16pt]
\item \textit{Move Right}. $\pm$ deletes the leftmost logical symbol and appends $\sigma'$ to $L$, thereby overwriting tape cell $i$. If the resulting leftmost logical symbol is $\triangleright$, indicating that $\tm$ has extended its non-blank segment by one blank cell to the right, then $\pm$ additionally appends a blank symbol $\bot$.
\item \textit{Move Left}. $\pm$ deletes the leftmost logical symbol and appends $\hat{\sigma'}$ to $L$, thereby overwriting tape cell $i$. It then cyclically rotates $L$ until the second leftmost logical symbol carries the $\hat{\cdot}$ mark. At this point, the first and second of $L$ correspond to tape cells $i-1$ and $i$, respectively. $\pm$ then adds a mark $\tilde{\cdot}$ to the first symbol and removes the $\hat{\cdot}$ mark from the second, after one additionally cyclic sweep of $L$; concretely, 
\begin{enumerate}
\item delete the leftmost element and append the same symbol with the mark $\tilde{\cdot}$; then
\item delete the leftmost element and append the same symbol with no mark; then
\item cyclically rotate $L$ until the leftmost logical symbol carries the $\tilde{\cdot}$ mark.
\end{enumerate}
\end{itemize}

\paragraph{Analysis} By induction on the number of simulated steps, one can check that the invariant is maintained throughout the simulation, and hence $\pm$ faithfully simulate $\tm$. For a right move, $\pm$ performs $O(1)$ queue operations. For a left move, it performs two full sweeps of $L$, costing $O(|L|)=O(s(n))$ operations. Therefore, the overall running time is $O(t(n)\cdot s(n))$, completing the proof.

\section{Multi-head Transformers}\label{app:trans}
A multi-head Transformer employs the following \emph{multi-head self-attention layer} in place of the single-head self-attention layer, while all other components remain the same as in the single-head transformer.

\textit{Multi-head self-attention layer}: For each head $k=1,2,\cdots,H$, compute the attention score
\[
\vec{s}_{k,i}^\ell=\hardmax\left(\langle \vec{h}^\ell_1\cdot \vec{Q}_{k}^{\ell},\vec{h}^\ell_i\cdot \vec{K}_k^{\ell} \rangle,\cdots,\langle \vec{h}^\ell_i\cdot \vec{Q}_{k}^{\ell},\vec{h}_i^\ell\cdot \vec{K}_k^{\ell} \rangle\right),
\]
and then
\[
\vec{a}^{\ell}_{i,k}=\sum_{j=1}^{i} s_{k,i,j}^\ell\cdot v^{\ell}_k(\vec{h}^{\ell}_i), \mbox{ where } v^{\ell}_k (h):=\vec{h}\cdot \vec{V}^{\ell}_k
\]
Here, $\vec{Q}^{\ell}_k,\vec{K}^{\ell}_k,\vec{V}^{\ell}_k\in\mathbb{R}^{d\times d/H}$ are parametrized matrices. By concatenating the $H$ heads, we obtain $\vec{a}^{\ell}_i=\left((\vec{a}^\ell_{i,1})^T,\cdots,(\vec{a}^\ell_{i,H})^T\right)^T\in \R^d$. Then, this sublayer returns 
\[
\vec{h}_i^{\ell+0.5}:=\vec{W}^\ell\cdot \vec{a}^{\ell}_i + \vec{b}^\ell + \vec{h}_{i}^{\ell},
\]
where $\vec{W}^\ell\in\R^{d\times d}$ and $\vec{b}^\ell\in \R^d$ are parametrized.

\end{document}